\newcommand{\cutout}[1]{}
\newcommand{\card}[1]{|#1|}
\newcommand{\Ent}[1]{{\cal E}(#1)}
\newcommand{\ET}[1]{\widetilde{{\cal E}}(#1)}
\newcommand{\minus}[1]{{#1}^{-}}
\newcommand{\rTo}{\longrightarrow}
\newcommand{\set}[1]{\{\,#1\,\}}
\newcommand{\tif}{\text{ if }}
\newcommand{\tand}{\text{ and }}
\newlength{\myhidel}
\newcounter{casecounter}
\def\thecasecounter{(\roman{casecounter})}
\newenvironment{proofbycases}[1][0]{
  \setcounter{casecounter}{#1}%
}{}
\newenvironment{caseinproof}{%
  \refstepcounter{casecounter}%
  \par\noindent%
  \emph{Case~\thecasecounter.}~
}{}
\newcommand{\firstvariable}[1]{\ifthenelse{\equal{#1}{a}}{a_{0}}{b_{0}}}
\newcommand{\secondvariable}[1]{\ifthenelse{\equal{#1}{a}}{a_{1}}{b_{1}}}
\newcommand{\thirdvariable}[1]{\ifthenelse{\equal{#1}{a}}{a_{2}}{b_{2}}}
\newcommand{\secondplayer}[1]{
  \ifthenelse{\equal{#1}{\pi}}{\sigma}{
    \ifthenelse{\equal{#1}{\sigma}}{\pi}{
      \ifthenelse{\equal{#1}{M}}{O}{M}
      }
    }
}
\newcommand{\Whitstar}{}
\newcommand{\Whitvariablea}{a_{0}}
\newcommand{\Whitvariableb}{a_{1}}
\newcommand{\Whitvariablec}{a_{2}}
\newcommand{\Whitlabel}[3]{
  \ifthenelse{\equal{#2}{\pi}}{\renewcommand{\Whitstar}{}}
  {\renewcommand{\Whitstar}{*}}
  \ifthenelse{\equal{#3}{a}}{
   \renewcommand{\Whitvariablea}{a_{0}}
   \renewcommand{\Whitvariableb}{a_{1}}
   \renewcommand{\Whitvariablec}{a_{2}}
   }
  {\renewcommand{\Whitvariablea}{b_{0}}
   \renewcommand{\Whitvariableb}{b_{1}}
   \renewcommand{\Whitvariablec}{b_{2}}}
 W_{#1}^{\Whitstar}(\Whitvariablea,\Whitvariableb,\Whitvariablec)
}
\newcommand{\classe}{\mathcal{C}}
\title{Closure Under Minors  of Undirected Entanglement}
\author{Walid Belkhir \\
\begin{small} Laboratoire d'Informatique Fondamentale de Marseille \end{small} \\ 
 CMI, 39 rue F. Joliot Curie 13453 Marseille Cedex 13, France  \\
 \texttt{belkhir@cmi.univ-mrs.fr}
} 
\date{}
\begin{document}
\maketitle

\begin{abstract}
Entanglement is a digraph complexity measure  that origins in fixed-point theory. Its  purpose is to  count the nested depth of cycles in digraphs. 
  In this paper we   prove  that the class of undirected graphs of entanglement at most  $k$, for arbitrary fixed $k \in \mathbb{N}$, is closed under taking  minors.  Our proof relies on the game theoretic characterization of entanglement in terms of Robber and Cops games. 
\end{abstract}
\textbf{Key words.} Entanglement, minors, Robber and Cops games.
\section{Introduction}
Entanglement is a complexity measure of finite directed graphs introduced in \cite{berwanger} as a tool to analyze the descriptive complexity of the propositional modal $\mu$-calculus \cite{BerwangerGraLen06}.  This measure has shown its use in solving the variable hierarchy problem\footnote{This problem asks whether the expressive power of a given fixed-point logic increases with the number of bound variables.} for the modal $\mu$-calculus \cite{BerwangerGraLen06} and  for the lattice $\mu$-calculus \cite{LPAR08}.  Roughly speaking, the entanglement of a $\mu$-formula (viewed as a graph)  gives the minimum number of bound variables (i.e. fixed-point variables) required in any equivalent $\mu$-formula. From these considerations, the entanglement is considered as the combinatorial counterpart of  the variable hierarchy.

Leaving the logical motivations in the background, recent works have been devoted to a  graph theoretic study of entanglement \cite{BelkSanto0a7,Rabthese,CAI09}, and in particular to characterizing the structure of graphs of  entanglement at most $k$. However,  only partial results are known:  the structure of directed graphs for $k=1$  \cite{berwanger}, $k=2$ \cite{Rabthese}; and of  undirected graphs for $k=2$  \cite{BelkSanto0a7},   and partially for $k=3$ \cite{CAI09}. \\
Furthermore, the exact complexity of deciding the entanglement of a graph is not yet known. By using general algorithms \cite{Jur00:STACS} it was argued in \cite{mathese} that deciding whether a graph has entanglement at most $k$, for fixed $k$, is a problem in PTIME.  In particular,  using the structural characterizations  mentioned above,  this   problem is in NLOGSPACE  for  directed graphs and $k=1$ \cite{berwanger}.    This problem can be solved in linear time  for the undirected graphs and $k=2$  \cite{BelkSanto0a7} and in cubic time  for directed graphs and $k=2$ \cite{Rabthese}.

In this paper we prove a fundamental result of the undirected entanglement: the class of undirected graphs of bounded entanglement is closed under minors. Our working definition of the entanglement of a graph $G$  is the minimum number of $k$-cops  required to   catch  Robber in some games $\Ent{G,k}$ on $G$ \cite{berwanger}.     Since the other definition \cite{berwanger} in terms of a certain unfolding  into trees with back edges can not be used  in an easy way.   Our proof technique to show that the  entanglement of a (undirected) graph $H$ is greater than the entanglement of its minor $G$  is largely inspired by  \cite{Joyal97,FraisseBook}:  a move of Robber in the game $\Ent{G,k}$ is simulated by a move or a sequence of Robber's moves in the game $\Ent{H,k}$ and in turn, Cops' response in $\Ent{H,k}$ is mapped to $\Ent{G,k}$ in the desired way   and soon. This sort of back-and-forth simulation reminds the back-and-forth games of \cite{FraisseBook}.

 Wagner's conjecture (proved in a series of papers by Robertson and Seymour \cite{GraphMinors10}), states that for every infinite set of graphs, one of its members is a minor of an other. Thus, every class of graphs that is closed under taking minors can be characterized by a finite set of excluded minors.    Since the class of undirected graphs of  bounded entanglement  is minor closed,  Theorem \ref{minorclosureTh},  then  it follows   that this class can be characterized   by a finite set of excluded minors. Therefore,   testing weather an  undirected graph has entanglement at  most $k$ can be checked in cubic time.     

 Finally we point out that only the set of excluded minors characterizing the   graphs of entanglement $\le 2$ is known  \cite{BelkSanto0a7},  or see \cite[\S 7]{mathese} for more details.  In the case of entanglement,  the  number of excluded  minors is relatively large   because an excluded minor  may contain articulation points.  The main challenge consists in finding a compact representation of the excluded minors.

\paragraph{Preliminaries and notations} ~\\
Throughout this paper, an undirected graph is called simply a graph, and a directed graph is  called a digraph.  A graph $G$ is a \emph{minor}  of a graph $H$  if $G$ can be obtained from $H$ by successive application of the following operations on it: \emph{(i)} delete an edge, \emph{(ii)} contract an edge, \emph{(iii)} delete an isolated vertex.

  Given a graph $G$ and an edge $e$, \emph{edge deletion} results in a graph $G\setminus e$ with the same vertex set as $G$ and the edge set $E_G \setminus \{e\}$; \emph{edge contraction} results in a graph $\partial_{e}^{z}G$ with the vertex set obtained by replacing the end-vertices of $e$ in $G$ by a new vertex $z$, the latter inherits all the neighbors of the two replaced vertices.  We shall write $\mathcal{N}(v)$ for the neighbors of vertex $v$. We denote by $G\setminus v$ the vertex deletion.

A class $\classe$ of graphs is closed under minors if  $G \in \classe$ then for every minor $H$ of $G$ we have that $H \in \classe$.      

\section{Entanglement}\label{sec:entang}
 The entanglement of a finite digraph $G$, denoted $\Ent{G}$, was
defined in \cite{berwanger} by means of some games $\Ent{G,k}$, $k =
0,\ldots ,\card{V_{G}}$. The game $\Ent{G,k}$ is played on the graph
$G$ by Robber against Cops, a team of $k$ cops. The rules are as
follows. Initially all the cops are placed outside the graph, Robber
selects and occupies an initial vertex of $G$.  After Robber's move,
Cops may do nothing, may place a cop from outside the graph onto the
vertex currently occupied by Robber, may move a cop already on the
graph to the current vertex.  In turn Robber must choose an edge
outgoing from the current vertex whose target is not already occupied
by some cop and move there.  If no such edge exists, then Robber is
caught and Cops win.  Robber wins if he is never caught.  It will be useful to formalize
these notions. \\

\begin{definition}
\emph{
  The entanglement game $\Ent{G,k}$ of a digraph $G$ is defined by:
\begin{itemize}
  \item Its positions are of the form $(v,C,P)$, where $v \in V_{G}$,
    $C \subseteq V_{G}$ and $\card{C} \leq k$, $\hspace{2mm} P \in \{Cops,
    Robber\}$.
 \item  Initially Robber chooses $v_{0} \in V_G$ and moves to
    $(v_0,\emptyset,Cops)$. 
 \item Cops can move  from $(v,C,Cops)$ to $(v,C',Robber)$
    where $C'$ can be
     \begin{enumerate}
     \item   $C$ : Cops skip,
       \item $C \cup\set{v}$ : Cops add a new Cop on the
      current position,
     \item  $(C \setminus\set{x}) \cup \set{v}$ : Cops move a placed Cop
      to the current position.
\end{enumerate}
 \item Robber can move from $(v,C,Robber)$ to $(v',C,Cops)$ if
    $(v,v') \in E_{G}$ and $v' \notin C$.
\end{itemize}
  Every finite play is a win for Cops, and every infinite play is a win
  for Robber. 
}
\end{definition}

\emph{The entanglement of $G$, denoted by $\Ent{G}$,  is the minimum $k \in \set{
    0,\ldots ,\card{V_{G}}}$ such that Cops have a winning strategy in
  $\Ent{G,k}$}.  \\

The following Proposition provides a useful variant of entanglement games, see also \cite{LPAR08}.  
\begin{proposition} \label{modif:entag}
 Let $\ET{G,k}$ be the game played as the game $\Ent{G,k}$ apart
  that Cops are allowed to retire a number of cops placed on the
  graph. That is, Cops moves are of the form
\begin{itemize}
  \item  $(g,C,Cops) \rightarrow (g,C',Robber)$     (generalized skip move),
  \item  $(g,C,Cops) \rightarrow (g,C'\cup \set{g},Robber)$     (generalized replace move), 
\end{itemize} 
 where in both cases $C' \subseteq C$.
  Then Cops have  a winning strategy in $\Ent{G,k}$ if and only if they 
  have a winning strategy in $\ET{G,k}$.
\end{proposition}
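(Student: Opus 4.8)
The plan is to prove the two implications separately, observing that one is immediate while the other requires a simulation whose guiding principle is that idle or retired cops never help Robber.

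For the easy direction I would note that every Cops move of $\Ent{G,k}$ is a special case of a Cops move of $\ET{G,k}$: the skip move is the generalized skip with $C'=C$, while the add move and the ordinary replace move are generalized replace moves with $C'=C$ and $C'=C\setminus\set{x}$ respectively. Since the positions and Robber's moves coincide in the two games, any winning Cops strategy in $\Ent{G,k}$ is verbatim a winning strategy in $\ET{G,k}$.

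For the converse, suppose $\sigma$ is a winning Cops strategy in $\ET{G,k}$ and build from it a strategy $\tau$ in $\Ent{G,k}$ by playing the two games in lockstep. Against an arbitrary Robber playing $\tau$ I would maintain a shadow play of $\ET{G,k}$ in which the shadow Robber copies every move of the real Robber and the shadow Cops follow $\sigma$. The invariant carried through every Cops-to-move position is that, with real position $(v,D,Cops)$ and shadow position $(v,C,Cops)$, one has $C\subseteq D$; the two Robber positions always agree. Initially both are $(v_{0},\emptyset,Cops)$, so the invariant holds. The core step is to translate each move of $\sigma$ into a legal $\Ent{G,k}$ move preserving $C\subseteq D$. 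If $\sigma$ performs a generalized skip to $C''\subseteq C$, then $\tau$ simply skips, since $C''\subseteq C\subseteq D$. If $\sigma$ performs a generalized replace to $C''=C'\cup\set{v}$ with $C'\subseteq C$, then $\tau$ must guarantee $v\in D'$: if $v\in D$ it skips; if $v\notin D$ and $\card{D}<k$ it adds a cop on $v$; and if $v\notin D$ with $\card{D}=k$ it moves a cop from some $x\in D\setminus C'$ to $v$. In each case $C''\subseteq D'$. Finally, when the real Robber moves from $(v,D',Robber)$ to $(v',D',Cops)$ the shadow Robber copies it; this is legal in $\ET{G,k}$ because $v'\notin D'$ together with $C''\subseteq D'$ gives $v'\notin C''$.

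To conclude I would argue by contradiction: if some play consistent with $\tau$ were infinite, the lockstep construction would produce an infinite shadow play consistent with $\sigma$, contradicting that $\sigma$ is winning; hence every $\tau$-play is finite, so $\tau$ is winning in $\Ent{G,k}$. The one delicate point, and the only place the cop budget $k$ is actually used, is the sub-case $\card{D}=k$ of a generalized replace: I must exhibit a cop that can be moved without breaking $C\subseteq D$. This is exactly a counting argument, since $v\notin C'$ forces $\card{C'}<k=\card{D}$ and therefore $\card{D\setminus C'}\geq 1$, so a removable $x\notin C'$ always exists.
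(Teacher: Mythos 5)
Your proposal is correct and follows essentially the same route as the paper's own proof: the trivial inclusion of move sets for one direction, and for the converse a lockstep simulation maintaining the invariant that the $\ET{G,k}$ cop set is contained in the $\Ent{G,k}$ cop set, with the same counting argument ($v\notin C'$ forces $\card{C'}<k$, so a removable cop outside $C'$ exists) in the full-budget replace case. The only cosmetic difference is your sub-case ``$v\in D$'' in the replace step, which is vacuous since Robber can only stand on a cop-free vertex, as the paper implicitly uses.
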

\cutout{
\begin{proof}
  Since every Cops' move in the game $\Ent{G,k}$ is a Cops' move in
  the game $\ET{G,k}$, and since there is no new kind of moves for
  Robber in the game $\ET{G,k}$, then a Cops' winning strategy in
  $\Ent{G,k}$ can be used to let Cops win in $\ET{G,k}$.

  In the other direction, a winning strategy for Cops in $\ET{G,k}$
  gives rise to a winning strategy for Cops in $\Ent{G,k}$ as
  follows.

  Each position $(g,C,P)$ of $\Ent{G,k}$ is matched by a position
  $(g,\minus{C},P)$ of $\ET{G,k}$ such that $\minus{C} \subseteq C$. A
  Robber's move $(g,C,Robber) \rightarrow (g',C,Cops)$ in $\Ent{G,k}$
  shall be simulated by the move $(g,\minus{C},Robber) \rightarrow
  (g',\minus{C},Cops)$ in $\ET{G,k}$. Note that Robber can perform such
  a move, since if $g' \in \minus{C}$ then already $g' \in C$.

  Assume that the position $(g,C_{0},Cops)$ of $\Ent{G,k}$ is matched by
  the position $(g,\minus{C}_{0},Cops)$ of $\ET{G,k}$. From
  $(g,\minus{C}_{0},Cops)$, Cops' winning strategy may suggest two kinds
  of moves.
  It may suggest a generalized skip $(g,\minus{C}_{0},Cops)
  \rightarrow (g,\minus{C}_{1},Robber)$ with $\minus{C}_{1} \subseteq
  \minus{C}_{0}$.  In this case, Cops skip on from the related
  position $(g,C_0,Cops)$, so that the new position $(g,C_0,Robber)$ is
  matched by $(g,\minus{C}_{1},Robber)$.

  Otherwise, Cops' winning strategy in $\ET{G,k}$ may suggest a
  generalized replace move $(g,\minus{C}_{0},Cops) \rightarrow
  (g,\minus{C}_{1} \cup \set{g},Robber)$.  If $\card{C_0} < k$, then
  Cops perform the add move $(g,C_0,Cops) \rightarrow (g,C_0 \cup
  \set{g},Robber)$. Notice that $\minus{C}_{1} \cup \set{g} \subseteq
  \minus{C}_{0} \cup \set{g} \subseteq C_{0} \cup \set{g}$.
  If $\card{C_0} = k$, then observe that $C_0 \setminus \minus{C}_ 1$ is not empty:
  we have $\minus{C}_1
  \subseteq \minus{C}_{0} \subseteq C_0$
  and
  $\card{\minus{C}_{1}} < k$, since $g \not \in \minus{C}_{1}$ and
  $\card{\minus{C}_{1} \cup \set{g}} \leq k$.
  Consequently we can pick $x \in C_0 \setminus \minus{C}_{1}$ such
  that $x \neq g$, since $g \not\in C_{0}$. Therefore Cops simulate
  the move $(g,\minus{C}_{0},Cops) \rightarrow (g,\minus{C}_{1} \cup
  \set{g},Robber)$ of $\ET{G,k}$ with the replace move $(g,C_0,Cops)
  \rightarrow (g,C_0\setminus \set{x} \cup \set{g},Robber)$ on
  $\Ent{G,k}$. Notice again that the invariant $\minus{C}_{1} \cup
  \set{g} \subseteq C_0 \setminus \set{x} \cup \set{g}$ is maintained.
\end{proof}
}

\section{Closure under minor of undirected entanglement}

 \begin{lemma} \label{subgraph:lemma}
If $G$ is a subgraph of $H$ then $\Ent{G} \le \Ent{H}$.
\end{lemma}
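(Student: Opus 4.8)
The plan is to show that for every $k$, a winning strategy for Cops in the game $\Ent{H,k}$ can be transported into a winning strategy in $\Ent{G,k}$; taking $k = \Ent{H}$ then yields $\Ent{G} \le k = \Ent{H}$. The guiding observation is that, since $G$ is a subgraph of $H$, we have $V_G \subseteq V_H$ and $E_G \subseteq E_H$, so Robber has fewer options in $G$ than in $H$, and any play on $G$ is literally a legal play on $H$.

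First I would fix a winning strategy $\sigma$ for Cops in $\Ent{H,k}$ and use it to define a strategy in $\Ent{G,k}$ by a lock-step simulation: each position $(v,C,P)$ reached in $\Ent{G,k}$ is matched with the identical position $(v,C,P)$ in $\Ent{H,k}$. The simulation begins with Robber's initial choice $v_0 \in V_G \subseteq V_H$, which is a legal initial choice in $\Ent{H,k}$ as well. When it is Cops' turn at $(v,C,\text{Cops})$, Cops consult $\sigma$ and replay in $\Ent{G,k}$ exactly the move it proposes; when it is Robber's turn, any edge $(v,v') \in E_G$ that Robber uses in $\Ent{G,k}$ is also an edge of $E_H$ with $v' \notin C$, hence a legal Robber move in $\Ent{H,k}$, which we copy over.

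The key invariant to verify is that the matched position is always a legal position of $\Ent{H,k}$ lying inside $G$, that is, $C \subseteq V_G$ throughout. This holds because in both games a cop is only ever placed on Robber's current vertex $v$, which belongs to $V_G$: each of the three Cops moves (skip, add, replace) produces a new cop set $C' \subseteq C \cup \set{v} \subseteq V_G$. Consequently $\sigma$ never calls for a cop on a vertex outside $G$, and the move it suggests is always a legal Cops move in $\Ent{G,k}$.

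Finally I would argue that the induced strategy is winning. Every play of $\Ent{G,k}$ consistent with it maps, move for move, to a play of $\Ent{H,k}$ in which Cops follow $\sigma$, and the two plays have the same length. Since $\sigma$ is winning, every such play of $\Ent{H,k}$ is finite, hence so is the corresponding play of $\Ent{G,k}$, and by the winning convention (finite plays are won by Cops) Cops win in $\Ent{G,k}$. I expect no serious obstacle: the whole argument is a bookkeeping simulation, and the only point demanding care is confirming that Robber's reduced move set in $G$ never forces the simulation out of a position reachable under $\sigma$, which is immediate because every $G$-move is an $H$-move.
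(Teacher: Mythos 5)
Your proof is correct, but it runs in the dual direction to the paper's. The paper argues on Robber's side: a winning strategy for Robber in $\Ent{G,k}$ transfers to $\Ent{H,k}$ simply by restricting Robber's moves to $G$, and the conclusion $\Ent{G}\le\Ent{H}$ then follows (implicitly using determinacy of these safety games, so that "Robber wins" and "Cops do not win" coincide). You instead transport a Cops winning strategy from $\Ent{H,k}$ down to $\Ent{G,k}$ with $k=\Ent{H}$, which requires the one non-trivial observation you correctly isolate: since cops are only ever placed on Robber's current vertex, the cop set stays inside $V_G$ throughout, so the strategy for $H$ never prescribes an illegal move in $G$. Your route avoids any appeal to determinacy and works directly with the definition of $\Ent{G}$ as the least $k$ for which Cops win, at the cost of a slightly longer bookkeeping argument; the paper's route is a one-liner but leans on the (easy, yet unstated) fact that failure of Cops to win means Robber wins. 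Both are sound, and your closing remark that an infinite play in $\Ent{G,k}$ would yield an infinite play in $\Ent{H,k}$ consistent with $\sigma$ is exactly the right way to finish.
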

\begin{proof}
Let $k=\Ent{G}$, then clearly, if Robber has a winning strategy in $\Ent{G,k}$ then he can use it to win in $\Ent{H,k}$ by restricting his moves on $G$.
\end{proof}

\begin{theorem}\label{minorclosureTh}
  The class of graphs of entanglement at most $k$, for arbitrary fixed $k \in \mathbb{N}$, is minor closed, that is if $G$ is  a  minor of $H$ then $\Ent{G}\le \Ent{H}$. 
\end{theorem}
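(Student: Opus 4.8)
The plan is to verify that the entanglement does not increase under each of the three elementary minor operations separately, and then to compose them, since a minor is obtained by a finite succession of such operations. Two of the three cases are immediate: if $G$ is obtained from $H$ by deleting an edge or by deleting an isolated vertex, then $G$ is a subgraph of $H$, so $\Ent{G}\le\Ent{H}$ by Lemma~\ref{subgraph:lemma}. The entire content of the theorem therefore lies in the remaining case of \emph{edge contraction}: writing $G=\partial_{e}^{z}H$ for an edge $e=\set{a,b}\in E_H$, I must show $\Ent{G}\le\Ent{H}$.

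For contraction I would transfer a Cops' winning strategy from $H$ to $G$ by a back-and-forth simulation, working throughout in the retirement variant $\ET{\cdot,k}$ licensed by Proposition~\ref{modif:entag}. Fix $k=\Ent{H}$ and a winning strategy for Cops in $\ET{H,k}$. The idea is to run, alongside the actual play of $\Ent{G,k}$, a shadow play of $\ET{H,k}$ governed by that strategy, linked by the correspondence: Robber's vertex $v$ in $G$ is matched by a shadow vertex $\hat v$ with $\hat v=v$ when $v\neq z$ and $\hat v\in\set{a,b}$ when $v=z$; each cop of $G$ on a vertex $u\neq z$ is matched by a cop of $H$ on $u$, while a cop of $G$ on $z$ is matched by a single cop of $H$ on one endpoint of $e$. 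Because the endpoints $a,b$ collapse to the single vertex $z$, this matching consumes no extra cops, so the shadow stays within the budget $k$ and, conversely, the $G$-configuration never uses more cops than the shadow.

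The heart of the argument is the translation of Robber's moves, maintained as an invariant at every Robber turn. A move of $G$ not touching $z$ is copied verbatim in $H$; a move into $z$ from a neighbour $u$ is simulated by the single shadow step $u\to a$ or $u\to b$ along whichever edge of $H$ survived as $\set{u,z}$, fixing $\hat v$ accordingly. The delicate case, and the one I expect to be the main obstacle, is a move \emph{out of} $z$ to a vertex $v'$ that in $H$ is adjacent only to the endpoint $t$ opposite to the current shadow vertex $s\in\set{a,b}$: here one $G$-move must be simulated by the \emph{two} shadow steps $s\to t\to v'$, and the forced Cops' response inserted between them threatens to guard $v'$ and so break the simulation. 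To control this I would strengthen the invariant so that whenever Robber occupies $z$ the opposite endpoint $t$ carries no cop (the lone $z$-cop, if any, sits on Robber's own shadow vertex $s$), guaranteeing that the switch $s\to t$ is always legal; the retirement moves of $\ET{\cdot,k}$ are then used to discard the cop that assisted the crossing once Robber has left the contracted edge, so that both the cop correspondence and the budget are restored for the next round. Checking that the intervening Cops' response can always be absorbed in this way—by mirroring it as the corresponding $G$-move and re-establishing the invariant—is the one place where a careful, if routine, case analysis is unavoidable.

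Finally, since the shadow play follows a Cops' winning strategy it is finite, i.e.\ the shadow Robber is eventually caught; as each $G$-move is simulated by at least one shadow move, an infinite play of $\Ent{G,k}$ would induce an infinite shadow play, which is impossible. Hence Cops win $\Ent{G,k}$ and $\Ent{G}\le k=\Ent{H}$. Composing the three cases along the successive operations that turn $H$ into an arbitrary minor $G$ yields $\Ent{G}\le\Ent{H}$, proving the theorem and hence that the class of graphs of entanglement at most $k$ is closed under minors.
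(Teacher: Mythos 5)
Your overall architecture coincides with the paper's: reduce everything to edge contraction via Lemma \ref{subgraph:lemma}, fix the quotient map collapsing $\set{a,b}$ to $z$, and run a back-and-forth simulation in which moves of $G$ touching $z$ are simulated by one or two shadow moves in $H$, with an invariant governing cops on the contracted edge. But there is a genuine gap at exactly the point you declare routine. Your invariant only guarantees that the endpoint \emph{opposite} the shadow vertex is cop-free; it allows Robber to stand on $z$ with \emph{no} cop on $z$ at all. Now let Robber leave $z$ towards some $w\in\mathcal{N}(b)\setminus\mathcal{N}(a)$ while the shadow sits at $a$: the simulation is $a\to b\to w$, and the Cops' winning strategy in $H$ may answer the intermediate position at $b$ by placing a cop on $b$, and the final position at $w$ by placing a cop on $w$. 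The composite change of the cop set, pushed through $f$, adds both $z$ and $w$ to $C_G$; but a Cops' move in $G$ (even in the retirement variant $\ET{G,k}$) may only add Robber's \emph{current} vertex, which is $w$, not $z$. Your proposed remedy --- retiring ``the cop that assisted the crossing'' --- is not available: the shadow Cops' moves are dictated by the fixed winning strategy, and silently dropping the mirrored cop on the $G$ side destroys the inclusion $f(C_H)\subseteq C_G$, after which a later legal entry of Robber into $z$ in $G$ may be unsimulatable in $H$ because the only endpoint adjacent to his current vertex is occupied.

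The paper closes this hole with a device your sketch lacks: a move of Robber \emph{into} $z$ is simulated not by a single shadow step but by letting the shadow Robber alternate between $a$ and $b$ until the winning strategy in $H$ is forced (on pain of losing) to place a cop on one of them. This establishes the invariant that whenever Robber is about to leave $z$, a cop already sits on $z$ in $G$ and on exactly one of $a,b$ in $H$; the later addition of $z$ is then vacuous and the translated Cops' move is legal. Note also that the retirement variant must be taken on the $G$ side, not on $H$ as you propose: it is the translated move in $G$ that may need to delete two cops at once (the images of the two shadow responses straddling the detour), whereas the shadow play should remain the ordinary game $\Ent{H,k}$ --- which is in fact what makes the alternation argument work, since there every non-skip Cops' move lands on Robber's current vertex.
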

\begin{proof}
  If $G$ is obtained from $H$ by edge-deletion then the statement obviously holds  by  Lemma \ref{subgraph:lemma}. Otherwise, if $G$ is obtained by edge-contraction i.e.  $G=\partial_{ab}^{z}H $ for some $ab \in E_H$, then  this allows to define a total function $f: V_H \rTo V_{G}$  as follows: 
\begin{align*}
f(v)=  \left \{
\begin{array}{ll}
  z & \tif v\in \set{a,b}, \\
  v  & \textrm{ otherwise}.
\end{array}
\right.
\end{align*}
 Let $k=\Ent{H}$, using the function  $f$ we shall construct a Cops' winning strategy in the game $\ET{G,k}$  out of a  Cops' winning strategy in $\Ent{H,k}$.
\noindent
To this goal, every position  $(g,C_G,P)$ of  $\ET{G,k}$ is matched with the position   $(h,C_H,P)$ of $\Ent{H,k}$,  where $P \in \set{Robber,Cops}$, such that the following invariants  hold:
\begin{align*}
  & \bullet \textrm{} g=f(h) \tand C_G=f(C_H), \label{COPS} \tag{COPS}  \\
  & \bullet  if  \textrm{ }  g=z   \textrm{ (hence }  h \in \set{a,b}) \tand P=Robber,    \textrm{ then } \\ & \hspace{7mm} z \in C_G \tand  h \in C_H;  \textrm{ moreover } \card{C_H \cap \set{a,b}}=1  \label{THIEF-Z} \tag{Robber-Z}. 
\end{align*}
\noindent The invariant (Robber-Z) may be understood as follows: whenever Robber will move from $z$ then $z$ must be occupied by a cop. At this moment, in $\Ent{H,k}$, either $a$ or $b$ must be occupied by a cop but not both.

We simulate every Robber's move of the form 
\begin{align*} M_G=(v,C_G,Robber) \to (w,C_G,Cops) \end{align*} of $\ET{G,k}$ either by a move or a sequence of moves in $\Ent{H,k}$ according to the \emph{locality} of Robber's move $M_G$:
\begin{enumerate}
\item  If  $M_G$ is \emph{outside $z$}, i.e. $v,w \neq z$ then in this case $M_G$ is simulated by the same move in $\Ent{H,k}$. 
\item  If $M_G$ is \emph{entering} to $z$, i.e. $w=z$ and $vw \in E_G$.  Assume $v \in \mathcal{N}(a)$ \footnote{The case $v \in \mathcal{N}(b) \setminus \mathcal{N}(a)$ is similar;  recall that $\mathcal{N}(v)$ are just the neighbors  of $v$.}. In this case, the move $M_G$ is  simulated  by a finite alternation   of Robber  between $a$ and $b$ until Cops put a cop on $a$ or $b$, and then the simulation is halted. That is,  the move $M_G$ is  simulated  by the finite alternating  sequence $M^{\star}_H$ of moves that is the following sequence apart the last move:

\begin{align*}
M^{\star}_{H}=(v,C_H,Robber)\to (a,C_H,Cops) & \to (a,C_H,Robber) \to (b,C_H,Cops)\\
 & \to (b,C_H,Robber)\to (a,C_H,Cops) \\
 & \to  \dots  \\
 &\to (x,C_H,Robber) \to (y,C_H,Cops)  \\ 
 & \hspace{9mm} "M^{\star}_H \textrm{ ends here}"\\
 & \to (y,C'_H,Robber)  \;\;\; 
\end{align*}

Such that $\set{x,y}=\set{a,b}$ and $C'_H \neq C_H$. Clearly $y \in C'_H$. Observe that this sequence is possible i.e. $b\notin C_H$, because if $b \in C_H$ then it follows by the invariant (\ref{COPS}) that $f(b)=z \in f(C_H)=C_G$,  that is $z \in C_G$,  which can not happen because we have assumed that the move $M_G$ is possible. The particular  case of Robber's first move to $z$ is simulated  by a similar finite alternating sequence of moves between $a$ and $b$, apart that  $C_H=C_G=\emptyset$.  
\item If $M_G$ is \emph{leaving} $z$, i.e. $v=z$ and $vw \in E_G$. Assume that the position $(z,C_G,Robber)$ is matched with $(a,C_H,Robber)$. Recall that  $z \in C_G$ and $a\in C_H$, by the invariant (Robber-Z). \\
\begin{enumerate}
\item  If $w \in \mathcal{N}(a)$ then the move $M_G$ is simulated by the same move of $\Ent{H,k}$.
 \item  If $w \in \mathcal{N}(b) \setminus \mathcal{N}(a)$, then the move $M_G$ is simulated by the  following sequence of moves:
 \begin{align*}
(a,C_H,Robber) \to (b,C_H,Cops) \to (b,C'_H,Robber) \to (w,C'_H,Cops).
 \end{align*}
 This sequence is possible, i.e. $b \notin C_H$ because already $a\in C_H$, therefore $b\notin C_H$, by the invariant (Robber-Z).
At this point, the  ending position of  $M_G$ -- which is the position $(w,C_G,Cops)$  -- is matched  with  the position $(w,C'_H,Cops)$ of $\Ent{H,k}$, we emphasize that Cops' next move  $(w,C'_H,Cops) \to (w,C''_H,Robber)$  in $\Ent{H,k}$  should be mapped to the move 
\begin{align*} 
(w,C_G,Cops) \to (w,f(C''_H),Robber) 
\end{align*}
in $\ET{G,k}$, and the main technical part  is to prove that the  latter move  respects the rules of the game .   
\end{enumerate}
\end{enumerate}
 A Cops' move  in $\Ent{H,k}$ is mapped to a Cops' move  in $\ET{G,k}$ as follows. 
Assume that the position  $(g,C_G,Cops)$ of $\ET{G,k}$ is matched with the position $(h,C_H,Cops)$ of $\Ent{H,k}$ and  moreover Cops have moved to 
\begin{align}\label{MV1}
(h,C_H,Cops)  \to (h,C'_H,Robber)
\end{align}
Therefore  Cops in $\ET{G,k}$ should move to 
\begin{align}\label{MV2}
(g,C_G,Cops) \to (g,f(C'_H),Robber)
\end{align}
 the aim is prove that the latter  move is legal w.r.t the rules of the game $\ET{G,k}$.  We distinguish three cases according to the manner by which $g$ has been reached by Robber in $\ET{G,k}$ in the previous round of simulation.
\begin{enumerate}
\item If $g$ has been reached by  an \emph{outside move}, hence $g \neq z, g=h$ ($g$ is the vertex considered in the move  (\ref{MV2}), and $h$ is considered in the  move  (\ref{MV1}), then  in this case, $C'_H$ may be written: $C'_H=(C_H\setminus A) \cup B$, where $\emptyset \subseteq B \subseteq \set{g}$  and $\card{A} \le 1$. (To be more precise we have $\card{A}\le \card{B}$.) Therefore
\begin{align*}
f(C'_H)&=[f(C_H \setminus A)] \cup f(B) \\
         &   = \left\{ \begin{array}{ll}
                & f(C_H) \cup f(B) \hspace{3mm}  \tif a,b \in C_H \tand A \subseteq \set{a,b},\\
            & (f(C_H) \setminus f(A)) \cup f(B) \hspace{3mm}  \textrm{ otherwise}
   \end{array}
\right.
\end{align*}
It is easy to  see that this is a legal move.
\item If $g$ has been reached by  an \emph{entering move}, hence $g=z$ and $h \in \set{a,b}$ (again $g$ is the vertex considered in the move  (\ref{MV2}), and $h$ is considered in the move (\ref{MV1})), then in this case $z\notin C_G$ and therefore $a,b \notin C_H$.  We shall argue  that the move $(z,C_G,Cops)\to (z,f(C'_H),Robber)$ respects the rules of the game. Assume that $h=a$. In this case $C'_H$ is of the form  
\begin{align*}
C'_H=(C_H \setminus A) \cup B
\end{align*}
where $ 0 \le |A| \le 1$ with $a,b \notin A$ and $\emptyset \subseteq B \subseteq \set{a}$, therefore 
 \begin{align*}
f(C'_H)&=f[(C_H \setminus A) \cup B] \\
     & = [f(C_H) \setminus f(A)] \cup f(B)
\end{align*}
Observe that  $z \notin f(A)$ and $\emptyset \subseteq f(B) \subseteq \set{z}$. Hence the move in question  respects the rules of the game.
\item If $g$ has been reached by a  \emph{leaving move}, hence  $h=g$ and $hz \in E_G$, then  in this case $z \in C_G$ and either $a \in C_H$ or $b \in C_H$ but not both, by the invariant (\ref{THIEF-Z}). We distinguish two cases:
\begin{proofbycases}
\begin{caseinproof} If $h$ has been reached by a single Robber's move in $\Ent{H,k}$ in the previous round of simulation, then one can check easily that every Cops' move  from position $(h,C_H,Cops)$ in $\Ent{H,k}$ is mapped to the same move from $(h,C_G,Cops)$ in $\ET{G,k}$.
\end{caseinproof}
\begin{caseinproof} If $h$ has been reached by a sequence of  moves in $\Ent{H,k}$, then let us go back to the previous round of the simulation. The previous  move in $\ET{G,k}$ was indeed of the form 
\begin{align*}
(z,C_G,Robber) \to (h,C_G,Cops)
\end{align*}
and its related simulation moves in $\Ent{H,k}$ are of the form
\begin{align*}
(a,C^{-1}_H,Robber) \to (b,C^{-1}_H,Cops) \to (b,C_H,Robber) \to (h,C_H,Cops) 
\end{align*}
In $\Ent{H,k}$, if Cops move to $(h,C_H,Cops) \to (h,C'_H,Robber)$ then this move is obviously mapped to Cops' move $(h,C_G,Cops) \to (h,f(C'_H),Robber)$ in $\ET{G,k}$. 
 Note that  $C'_H=(C^{-1}_H\setminus A) \cup B$ where \\ $\emptyset \subseteq B \subseteq \set{b,h}$ and $A \subseteq V_H$ with $0\le \card{A} \le 2$, let us compute $C'_G=f(C'_H)$ in terms of $C_G$:
\begin{align*}
f(C'_H) &=[f(C^{-1}_H \setminus A)] \cup f(B) \\
           & = [\big(f(C^{-1}_H) \setminus f(A)\big) \cup Z] \cup f(B) 
\end{align*}

  where $\emptyset \subseteq Z \subseteq \set{z}$   and 
  $\emptyset \subseteq f(B)=B' \subseteq \set{z,h}$, therefore 
\begin{align*}
f(C'_H) & = [f(C^{-1}_H) \setminus f(A)] \cup (Z \cup B')\\
        &  =(C_G \setminus f(A)) \cup B''
\end{align*}

 where still $\emptyset \subseteq B''=Z \cup B' \subseteq \set{z,h}$. Recall that $z \in C_G$ by the invariant (\ref{THIEF-Z}) and hence the move in question respects the rules of the game. 
\end{caseinproof}
\end{proofbycases}
\end{enumerate}

 Finally, the invariants  (\ref{COPS}) and (\ref{THIEF-Z})  are preserved by construction.  This ends the proof of Theorem \ref{minorclosureTh}.
 \end{proof}
   A similar  Proposition to  the following  one  concerning the  tree-width instead of the entanglement, has been   proved in \cite{GraphMinors2}.
\begin{proposition}
If $G$ is a direct minor of $H$ then $\Ent{H}-1 \le \Ent{G}$
\end{proposition}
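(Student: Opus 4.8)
The plan is to read the inequality as $\Ent{H}\le \Ent{G}+1$ and, since Theorem~\ref{minorclosureTh} already supplies $\Ent{G}\le\Ent{H}$, to show that a single minor operation costs Cops at most one extra cop. Concretely, writing $k=\Ent{G}$, I would start from a Cops' winning strategy in $\Ent{G,k}$ and build a Cops' winning strategy in $\ET{H,k+1}$, invoking Proposition~\ref{modif:entag} so that the constructed strategy may retire cops. Deleting an isolated vertex is immediate: such a vertex has no outgoing edge and never occurs in an infinite play, so $\Ent{H}=\Ent{G}$. For edge deletion, $G=H\setminus e$ with $e=uv$, the vertex sets agree and $H$ differs from $G$ only by the single edge $e$; here the idea is to run the $k$-cop $G$-strategy unchanged and to reserve the extra cop for \emph{neutralising} $e$, holding one endpoint of $e$ whenever Robber visits the other so that the move along $e$ is never offered and Robber is effectively confined to $G$.

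The substantial case is edge contraction, $G=\partial_{uv}^{z}H$, and I would re-use the function $f$ of Theorem~\ref{minorclosureTh}, with $f(u)=f(v)=z$ and $f$ the identity elsewhere. The simulation keeps Robber's position $h$ in $\ET{H,k+1}$ coupled to the shadow position $f(h)$ in $\Ent{G,k}$, with the cop sets linked by $f(C_H)=C_G$, the latter evolving according to the $G$-strategy. The source of the $+1$ is the observation that a single cop on $z$ in $G$ has to be mirrored by cops on \emph{both} copies $u$ and $v$ in $H$: if only one copy were guarded, Robber could slip into the unguarded copy along an $H$-edge whose $f$-image would lead into the already blocked vertex $z$, a move the shadow Robber cannot make. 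Because $z$ is the only vertex of $G$ with a two-element fibre, at most one cop beyond $\card{C_G}$ is ever in play, which is exactly what the budget $\card{C_H}\le k+1$ permits. Robber's moves split, as in the theorem, into those outside the pair $\set{u,v}$ (mirrored verbatim), those entering the pair, and those leaving it, and I would maintain two invariants dual to (COPS) and (Robber-Z): $f(C_H)=C_G$ at all times, and, whenever Robber stands in $\set{u,v}$ with $z$ blocked in the shadow, both $u$ and $v$ carry a cop.

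The hard part is realising the doubled cop under the placement rule of entanglement, which permits dropping or sliding a cop only onto the vertex Robber currently occupies. At the moment the $G$-strategy first puts a cop on $z$, Robber in $H$ sits on a single copy, say $u$, so $u$ can be guarded but $v$ cannot yet. My resolution, dual to the oscillation device of Theorem~\ref{minorclosureTh}, is to let Robber's own moves inside $\set{u,v}$ transport him to the second copy while the spare cop holds the copy already covered, and then to argue that Robber gains nothing by oscillating between $u$ and $v$: any such oscillation corresponds to Robber sitting at $z$ in the shadow game, where the $G$-strategy eventually forces a capture, and this capture translates into trapping Robber inside the pair on $H$ with all external neighbours of $u$ and $v$ guarded by the mirrored cops together with $u,v$ themselves, a configuration that stays within the $k+1$ budget. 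The remaining work---checking that each lifted Cops' response is legal in $\ET{H,k+1}$ and that the two invariants survive Robber entering, leaving, and oscillating within the pair---is the routine but lengthy bookkeeping, entirely parallel to the case analysis already carried out for Theorem~\ref{minorclosureTh}, and I expect it, rather than any single step, to be the real labour of the proof.
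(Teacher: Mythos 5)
Your reading of the inequality and the overall direction (build a $(k+1)$-cop strategy on $H$ from a $k$-cop strategy on $G$) are right, but the paper takes a much shorter route that bypasses all of your simulation machinery. It first proves the Claim that $\Ent{H}\le \Ent{H\setminus v}+1$ for any vertex $v$: the extra cop is parked on $v$ at Robber's first visit and never moved, after which Robber is confined to $H\setminus v$. It then observes that for either operation applied to an edge $ab$ (deletion or contraction), the graph $H\setminus a$ is a subgraph of $G$ (for contraction, identify $b$ with $z$), so Lemma \ref{subgraph:lemma} gives $\Ent{H\setminus a}\le \Ent{G}$ and the Claim finishes. No back-and-forth simulation and no invariants are needed.

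As written, your simulation has two concrete gaps. For edge deletion, the plan of ``holding one endpoint of $e$ whenever Robber visits the other'' is not realisable under the rules: a cop may only be dropped on, or slid to, the vertex Robber currently occupies, so the first time Robber reaches $u$ without ever having been at $v$ there is no cop on $v$ and no way to put one there; Robber may then traverse $e$, and the shadow Robber in $G$ has no corresponding move. Your proposal does not say how the simulation survives this. For edge contraction, the oscillation does not behave as you claim: while Robber shuttles between $u$ and $v$, the shadow Robber sits still at $z$, so the $G$-strategy is never prompted and cannot ``eventually force a capture''---if Cops merely mirror the $G$-strategy, Robber oscillates forever and wins. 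Cops must break the oscillation on their own initiative by parking the spare cop on one of the two copies, but that cop then has no counterpart in $C_G$, so the invariant $f(C_H)=C_G$ fails and must be weakened (say to $C_G\subseteq f(C_H)\subseteq C_G\cup\set{z}$), with the legality checks redone; none of this appears in your sketch. Both defects are repairable, but the repair essentially rediscovers the paper's observation that one cop permanently parked on a single endpoint already reduces the play to a subgraph of $G$.
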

\begin{proof}
We need the following Claim.
\begin{claim} \emph{To prove that $\Ent{H}-1 \le \Ent{G}$ it suffices to prove that $\Ent{H \setminus v} \le \Ent{G}$, for some $v\in V_H$}.
\end{claim}
\begin{proof}
Assume that $\Ent{H\setminus v} \le \Ent{G}$, and let $k=\Ent{G}$. This implies that if Cops have a winning strategy  in $\Ent{G,k}$ then they have a winning strategy $S_1$   in $\Ent{H\setminus v,k}$. Out of the winning strategy $S_1$ they can construct a winning strategy in $\Ent{H,k+1}$ as follows: if Robber  restricts his moves on  $V_H \setminus v$ then play with $S_1$, and if Robber goes to $v$ then put the  $(k+1)^\textrm{th}$ cop on $v$ and never move it.  This ends the proof of the Claim.
\end{proof}
If   $G$ is obtained from $H$ by deleting some edge $ab$, then observe that $H \setminus a $ is a subgraph of $G$, therefore from Lemma \ref{subgraph:lemma} we get $\Ent{H \setminus a} \le \Ent{G}$.  We conclude -- according to the Claim -- that $\Ent{H}-1 \le \Ent{G}$. If   $G$ is obtained from $H$ by contracting some edge $ab$, then  $H\setminus a$ is  again a subgraph of $G$, and  the argument  is similar to the above one. 
\end{proof}

The following Corollary  provides a useful indication for  searching    the minimal set of excluded minors characterizing  graphs of bounded entanglement.
\begin{corollary}
Let $\mathcal{F}_k$ be the  minimal excluded minors for the  class of graphs of entanglement at most $k$. Then,  every graph in $\mathcal{F}_k$ has exactly entanglement $k+1$.
\end{corollary}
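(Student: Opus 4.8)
The plan is to read off the lower bound directly from the definition of a minimal excluded minor and to obtain the matching upper bound from the Proposition just established. By definition, a graph $F \in \mathcal{F}_k$ is an obstruction for the class of graphs of entanglement at most $k$: it lies outside the class, so $\Ent{F} \ge k+1$, while every \emph{proper} minor $G$ of $F$ lies inside it, so $\Ent{G} \le k$. Hence the inequality $\Ent{F} \ge k+1$ is free, and all that remains is to prove $\Ent{F} \le k+1$.

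For the upper bound I would first note that $F$ necessarily contains an edge. Indeed $\Ent{F} \ge k+1 \ge 1$, whereas an edgeless graph has entanglement $0$, since Robber, placed on any vertex, has no outgoing edge and is caught immediately even by $0$ cops. Fix an edge $ab \in E_F$ and let $G$ be the direct minor obtained from $F$ by deleting (or contracting) $ab$. Then $G$ is a proper minor of $F$, so minimality yields $\Ent{G} \le k$. Applying the preceding Proposition with $H = F$, of which $G$ is a direct minor, gives $\Ent{F} - 1 \le \Ent{G}$, that is $\Ent{F} \le \Ent{G} + 1 \le k+1$. Combining this with $\Ent{F} \ge k+1$ forces $\Ent{F} = k+1$.

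I expect no genuine difficulty in this argument; it is essentially the Proposition squeezed between the two defining halves of a minimal excluded minor. The only point deserving a remark is the existence of a direct minor to which the Proposition can be applied, equivalently the fact that $F$ has at least one edge, and this is exactly what the bound $\Ent{F} \ge 1$ secures. (Note that one may take the deletion direct minor, for which the Proposition is proved outright, so no appeal to the isolated-vertex case is needed.)
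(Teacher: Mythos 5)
Your argument is correct and is exactly the intended one: the paper states this Corollary immediately after the Proposition $\Ent{H}-1 \le \Ent{G}$ without writing out a proof, and your squeeze --- $\Ent{F}\ge k+1$ from exclusion, $\Ent{F}\le \Ent{G}+1\le k+1$ from the Proposition applied to an edge-deletion direct minor --- is precisely the one-line consequence being invoked. Your side remark that $F$ must contain an edge (since an edgeless graph has entanglement $0\le k$) correctly dispatches the only degenerate case.
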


\bibliographystyle{plain}
\bibliography{biblio}

\end{document}
